\documentclass[12pt]{amsart}
\usepackage{geometry} 
\geometry{a4paper} 
\usepackage{graphicx}
\usepackage{amsfonts}
\usepackage{amsthm}
\usepackage{amsmath}
\usepackage{amssymb}
\usepackage[arrow,matrix,curve]{xy}
\usepackage{enumerate}
\usepackage{color}
\usepackage[normalem]{ulem}
\usepackage{tikz}
\usepackage{xspace}
\definecolor{light-blue}{rgb}{0.8,0.85,1}
\definecolor{light-red}{rgb}{1,.4,.4}
\definecolor{purp}{rgb}{.7,.3,1}
\definecolor{yel}{rgb}{1,1,.5}
\definecolor{cy}{rgb}{0,1,1}
\usepackage{colortbl}
\usepackage{multirow}
\usepackage{array}
\newtheorem{theorem}{Theorem}

\newtheorem{lemma}{Lemma}
\newtheorem{proposition}{Proposition}
\theoremstyle{definition}

\newtheorem{remark}{Remark}
\newcommand{\co}{\colon\,}
\newcommand{\bT}{\mathbb T}
\newcommand{\bR}{\mathbb R}
\newcommand{\bC}{\mathbb C}
\newcommand{\bZ}{\mathbb Z}

\newcommand{\bH}{\mathbb H}

\newcommand{\cH}{\mathcal H}
\newcommand{\cK}{\mathcal K}

\newcommand{\tG}{\widetilde G}
\newcommand{\ta}{\widetilde a}
\newcommand{\tb}{\widetilde b}
\newcommand{\tI}{\widetilde{\mathrm I}}
\newcommand{\tIA}{\widetilde{\mathrm{IA}}}

\newcommand{\lp}{\textup{(}}
\newcommand{\rp}{\textup{)}}

\newcommand{\Aut}{\operatorname{Aut}}

\renewcommand\Re{\operatorname{Re}}

\newcommand{\id}{\text{id}}
\setcounter{secnumdepth}{5}


\title[Real Baum-Connes and T-duality]{Real Baum-Connes assembly and\\
T-duality for torus orientifolds}

\author{Jonathan Rosenberg}
\address{Department of Mathematics\\
University of Maryland\\
College Park, MD 20742-4015, USA} 
\email[Jonathan Rosenberg]{jmr@math.umd.edu}
\thanks{Partially supported by NSF grant DMS-1206159.}
  
\begin{document}

\begin{abstract}
We show that the real Baum-Connes conjecture for abelian groups,
possibly twisted by a cocycle, explains the isomorphisms of
(twisted) $KR$-groups that underlie all T-dualities of torus
orientifold string theories.
\end{abstract}
\keywords{orientifold, $KR$-theory, twisting, T-duality,
  real Baum-Connes conjecture, assembly map, Phillips-Raeburn
  invariant, Hilbert symbol}
\subjclass[2010]{Primary 46L60; Secondary 19K35 19L64  81T30  46L85 19L50.}

\maketitle

\section{Introduction}
\label{sec:intro}

This paper was motivated by joint work with Charles Doran and Stefan
Mendez-Diez \cite{Doran:2013sxa,DMDR}, in which we studied type II
orientifold string theories on circles and $2$-tori. In these
theories, D-brane charges lie in twisted $KR$-groups of $(X,\iota)$,
where $X$ is the spacetime manifold and $\iota$ is the involution on
$X$ defining the orientifold structure. (That D-brane charges for
orientifolds are classified by $KR$-theory was pointed out in 
\cite[\S5.2]{Witten:1998}, \cite{Hori:1999me}, and \cite{Gukov:1999},
but twisting (as defined in
\cite{MoutuouThesis,2011arXiv1110.6836M,2012arXiv1202.2057M} and
\cite{Doran:2013sxa}) may arise due to the $B$-field, 
as in \cite{Witten:1998-02}, and/or the charges of the
$O$-planes, as explained in \cite{Doran:2013sxa}.) 
These orientifold theories were found in \cite{DMDR} to
split up into a number of T-duality groupings, with the theories in
each grouping all related to one another by various T-dualities. The
twisted $KR$-groups attached to each of the theories within a
T-duality grouping were all found to be isomorphic to one another, up
to a degree shift.

One thing that was missing in this previous work was a mathematical
explanation for these twisted $KR$ isomorphisms. The purpose of this
paper is to provide such an explanation. In fact, it turns out that
the isomorphisms of $KR$-groups associated with the T-dualities for
torus orientifolds come from the real Baum-Connes assembly maps for abelian
groups, possibly twisted by a cocycle.   Thus these T-dualities may be
explained mathematically by the fact that the real Baum-Connes
conjecture is valid for these cases \cite{MR2082090,MR2077669}.

Since the appearance of the Baum-Connes conjecture in this context
might seem surprising and unmotivated, we should perhaps explain why
were led to look at the Baum-Connes assembly map.  There are two
justifications. One is that in \cite{MR2116734} and in
\cite{MR2560910}, we found that the $K$-theory isomorphisms associated
to T-duality can often explained via the calculation of $K$-theory of
crossed products by $\bR$ using Connes' ``Thom isomorphism'' theorem
of \cite{MR605351}. This theorem of Connes is now recognized as being
a special case of a proof of the Baum-Connes conjecture with
coefficients (in the case $G=\bR$). While the cases treated in
\cite{MR2116734,MR2560910} deal with complex $K$-theory, duality
involving $KR$-theory appeared in \cite[Theorem 2.5]{MR842428} in the case where
the group involved is free abelian. In that theorem, real Baum-Connes
was shown to give an isomorphism from $KO$-homology of a torus to
$KR$-cohomology of another torus of the same dimension (really the
T-dual torus).  This isomorphism turns out (in 
Proposition \ref{prop:typeIS1} below) to be related to T-duality
between the type I and type IA string theories on the circle.

\section{The main construction and results}
\label{sec:main}

We will be working throughout with Atiyah's $KR$-theory
\cite{MR0206940}. This is the topological $K$-theory (with compact supports)
of Real vector bundles $E$ over Real spaces $(X, \iota)$. A Real space
is just a locally compact Hausdorff space $X$ equipped with a
self-homeomorphism $\iota$ satisfying $\iota^2=\id_X$. A Real vector
bundle $E$ over such a space is a complex vector bundle equipped with
a conjugate-linear vector bundle automorphism of period $2$
compatible with $\iota$. The
$KR$-theory of $(X, \iota)$ can be identified with the topological
$K$-theory of the real Banach algebra $C_0(X,\iota) =\{f\in C_0(X)
\mid f(\iota(x)) = \overline{f(x)}\}$. We shall use 
the indexing convention of \cite{MR1031992,Doran:2013sxa,DMDR}:
$\bR^{p,q}$ denotes $\bR^p\oplus 
\bR^q$ with the involution that is $+1$ on the first summand and $-1$
on the second summand, and $S^{p,q}$ denotes the unit sphere in
$\bR^{p,q}$.

\subsection{Circle orientifolds}
\label{sec:circle}

We begin with the case of orientifolds on a circle, with the
involution coming from a linear involution on $\bR^2$, restricted to
the unit circle. It was found in \cite{Gao:2010ava,Doran:2013sxa,DMDR}
that there are four such orientifold theories, known in the physics
literature as types I, $\tI$, IA, and
$\tIA$. These split into two T-duality groupings, one of
which contains theories I and IA, corresponding to the Real spaces
$S^{2,0}$ and $S^{1,1}$, and the other of which contains theories
$\tI$ and $\tIA$, corresponding to the Real spaces
$S^{0,2}$ and $S^{1,1}_{(+,-)}$. Here the subscript $(+,-)$ in
$S^{1,1}_{(+,-)}$ indicates 
that of the two $O$-planes in $S^{1,1}$ (i.e., fixed points for the
involution), one has been given a plus sign (meaning that the
Chan-Paton bundle there is of real type) and one has been given a
minus sign (meaning that the Chan-Paton bundle there is of quaternionic type).

Let us now see that the twisted $KR$ isomorphisms in these two
T-duality groupings amount to Baum-Connes assembly maps. The first
group consists of theories I and IA, corresponding to the Real spaces
$S^{2,0}$ and $S^{1,1}$. Thus we want an isomorphism $KR^{1-*}(S^{2,0})
\cong KR^{-*}(S^{1,1})$. (The degree shift by $1$ is explained by the
fact that we are applying T-duality in a single circle, and thus going
from a IIB theory to a IIA theory.) Now $S^{2,0}$ is simply $S^1$ with
a trivial involution, which we can identify with the classifying space
$B\bZ$ of the infinite cyclic group $\bZ$. The real Baum-Connes
conjecture holds \cite{MR2082090,MR2077669}
for amenable groups since the complex Baum-Connes
conjecture holds for these groups \cite{MR1821144}. Thus Baum-Connes
gives an assembly isomorphism
\begin{equation}
\mu\co KO_j(B\bZ) \xrightarrow{\cong} KO_j(C^*_{\bR}(\bZ)).
\label{eq:BCZ}
\end{equation}
We can ``unpack'' this as follows.  $B\bZ = S^1 = S^{2,0}$, so the
left-hand side is $KO_j(S^1) \cong KO^{1-j}(S^1)$ by Poincar\'e
duality, since $S^1$ is a spin manifold.  On the other hand,
for an abelian locally compact group $G$,  the Fourier transform sends
$L^1_{\bR}(G)$ (as a $*$-algebra with convolution multiplication) to a
dense subalgebra of 
\[
C_0(\widehat G, \iota)= \left\{ f\in C_0(\widehat G) :
f(x^{-1}) = \overline{f(x)} \text{ for }x\in \widehat G\right\},
\]
where $\widehat G$ is the Pontrjagin dual of $G$ and $\iota$ is the
involution given by group inversion.  Taking $C^*$-completions gives
$C^*_{\bR}(G)\cong C_0(\widehat G, \iota)$. In the case of $G=\bZ$,
$\widehat G=\bT$ and inversion $\iota$ on $\bT$ is complex
conjugation, and thus 
\[
KO_j(C^*_{\bR}(\bZ))\cong KO_j(C(\bT, \iota)) = KO_j(C(S^{1,1})) = KR^{-j}(S^{1,1}).
\]
Putting everything together, we obtain
\begin{proposition}
\label{prop:typeIS1}
There is a natural isomorphism $KO^{1-j}(S^1)\to KR^{-j}(S^{1,1})$
given by the composite $\mu\circ \delta$, where $\mu$ is the real
Baum-Connes assembly map of \eqref{eq:BCZ}
for the discrete group $\bZ$ and where
$\delta\co KO^{1-j}(S^1)\to KO_j(S^1)$ is Poincar\'e duality {\lp}given
analytically by Kasparov product with the class of the Dirac
operator{\rp}. 
\end{proposition}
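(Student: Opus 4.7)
The plan is to assemble the four ingredients already isolated in the paragraph preceding the statement and check that their composite makes sense and is an isomorphism; no new geometry is really needed, since both the Baum--Connes isomorphism for $\bZ$ and Poincar\'e duality on $S^1$ are available off the shelf.

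First I would identify the spaces. The circle $S^1$, viewed with trivial involution, is exactly $S^{2,0}$, and it is also a model for the classifying space $B\bZ$, so the domain of the assembly map \eqref{eq:BCZ} is $KO_j(S^1) = KO_j(S^{2,0})$. Since $S^1$ is a $1$-dimensional spin manifold, it carries a fundamental class $[D] \in KO_1(S^1)$ represented by the Dirac operator, and Kasparov product with $[D]$ gives the Poincar\'e duality isomorphism
\[
\delta\co KO^{1-j}(S^1) \xrightarrow{\cong} KO_j(S^1).
\]
This is the first map in the composite, and its status as an isomorphism is classical.

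Second I would invoke the real Baum--Connes conjecture, which holds for $\bZ$ (as an amenable group) by the references \cite{MR2082090,MR2077669} already cited, together with the complex case \cite{MR1821144}. Thus the assembly map
\[
\mu\co KO_j(B\bZ) \xrightarrow{\cong} KO_j(C^*_{\bR}(\bZ))
\]
is an isomorphism. To land in the desired target I would then use the Fourier transform identification $C^*_{\bR}(\bZ) \cong C_0(\widehat{\bZ},\iota) = C(\bT,\iota)$ recalled in the excerpt: on $\bT$, the group inversion $\iota$ coincides with complex conjugation, so $(\bT,\iota) = S^{1,1}$ as Real spaces. Hence $KO_j(C^*_{\bR}(\bZ)) \cong KO_j(C(S^{1,1})) = KR^{-j}(S^{1,1})$, and the composite $\mu \circ \delta$ is the claimed natural isomorphism.

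The only genuine point to verify is that this identification of the target of $\mu$ with $KR^{-j}(S^{1,1})$ sends the right involution to complex conjugation on $\bT$ and respects the grading convention; this is the step where I expect the most care, since it requires checking that the Fourier isomorphism is an isomorphism of \emph{real} $C^*$-algebras with the correct Real structure, not merely of complex algebras. Naturality in $j$ is automatic once each of $\delta$, $\mu$, and the Fourier identification is known to be natural, so no additional work is needed there.
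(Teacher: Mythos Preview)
Your proposal is correct and follows essentially the same route as the paper: both arguments simply assemble Poincar\'e duality on the spin $1$-manifold $S^1$, the real Baum--Connes isomorphism \eqref{eq:BCZ} for $\bZ$, and the Fourier identification $C^*_{\bR}(\bZ)\cong C(S^{1,1})$ already spelled out in the paragraph preceding the proposition. The paper's own proof is in fact even terser than yours, merely pointing back to that discussion and citing Kasparov for analytic Poincar\'e duality.
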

\begin{proof}
All of this was outlined above.  The statement and proof of
analytical Poincar\'e duality may be
found in \cite[\S4]{MR918241}.
\end{proof}

The more interesting and subtle case comes from the other T-duality
grouping, consisting of the theories of types $\tI$ and
$\tIA$. We want an isomorphism $KR^{1-*}(S^{0,2})
\cong KR^{-*}_{(+,-)}(S^{1,1})$. (Here the $(+,-)$ decoration on the
$KR$-groups was explained in \cite[\S4]{Doran:2013sxa}.
The groups $KR^{-*}_{(+,-)}(S^{1,1})$ are the topological $K$-theory of
a real Banach algebra 
which is locally Morita equivalent to $C(S^{1,1})$ near the fixed
point with the $+$ decoration and locally Morita equivalent to
$C(S^{1,1})\otimes \bH$ near the fixed
point with the $-$ decoration.) Such an isomorphism was obtained
``experimentally'' in \cite[\S4.1]{Doran:2013sxa}, but the treatment
there didn't really explain where this isomorphism comes from (except
in terms of the physics interpretation using T-duality of orientifold
theories). 

Let $G=\left\langle a,b \mid ab=ba,\ b^2=1\right\rangle$. This is an
abelian group isomorphic to $\bZ\times \bZ/2$. On this group we can
define a (normalized)
$2$-cocycle $\omega$ with values in $O(1)=\{\pm 1\}$ by $\omega(a, b)
= \omega(b, a) = \omega(b, b) = -1$, $\omega(a, a)=+1$.
Let $C^*(G,\omega)$ be the (complex) group $C^*$-algebra of $G$
twisted by this cocycle, and let $C^*_{\bR}(G,\omega)$ be the
corresponding real $C^*$-algebra.
\begin{lemma}
\label{lem:S1pm}
With $G$ and $\omega$ as just defined, $C^*(G,\omega)$ has spectrum
$S^1$ and 
\[
KO_j(C^*_{\bR}(G,\omega))\cong KR^{-j}_{(+,-)}(S^{1,1}).
\]
\end{lemma}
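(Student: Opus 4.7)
My plan is to compute $C^*(G,\omega)$ explicitly from its generators and relations, identify the Real structure on it, and match the resulting Real C*-algebra to the one computing $KR^{-*}_{(+,-)}(S^{1,1})$.

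First I would present $C^*(G,\omega)$ in terms of the canonical unitaries $U_a,U_b$ attached to the generators of $G$. The cocycle values translate into the relations $U_b^{2}=-1$ and $U_aU_b=-U_{ab}=U_bU_a$, so the generators commute, and together with unitarity this determines $C^*(G,\omega)$ as a commutative universal C*-algebra. Since $U_b^{*}=U_b^{-1}$ and $U_b^{2}=-1$, the spectrum of $U_b$ lies in $\{\pm i\}$, so the complex primitive ideal space is a trivial double cover of a circle parametrized by the spectrum of $U_a\in S^1$. The Real involution on $C^*_{\bR}(G,\omega)$ coming from group inversion sends $(z,\pm i)\mapsto (\bar z,\mp i)$, and taking the quotient by this Real involution yields a primitive ideal space $S^1$, establishing the first assertion of the lemma.

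For the second assertion, I would identify $C^*_{\bR}(G,\omega)$, as a Real C*-algebra up to Real Morita equivalence, with the Real Banach algebra that by definition computes $KR^{-*}_{(+,-)}(S^{1,1})$, namely the one locally Morita equivalent to $C(S^{1,1})_{\bR}$ near the $+$ fixed point and to $C(S^{1,1})_{\bR}\otimes\bH$ near the $-$ fixed point. Concretely I would build an explicit Real-equivariant Morita equivalence in which $U_a$ provides the coordinate on the base circle $S^{1,1}$ and $U_b$ encodes the Real structure: the two spectral eigenspaces of $U_b$ at $\pm i$ correspond to the two fixed points, while the signs $\omega(a,b)=\omega(b,a)=-1$ force the complex conjugation $U_b\mapsto U_b^{-1}=-U_b$ to swap these eigenspaces with the appropriate twist, producing precisely the opposite-sign decoration $(+,-)$ rather than the alternatives $(+,+)$ or $(-,-)$. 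The formalism of Real twistings of \cite{MoutuouThesis,2011arXiv1110.6836M,2012arXiv1202.2057M} and the Phillips-Raeburn machinery as employed in \cite{Doran:2013sxa} should then translate the sign data of $\omega$ directly into the $(+,-)$ decoration.

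The hardest step, I expect, will be this last identification: showing that the specific Real structure produced by $\omega$ matches the $(+,-)$ decoration rather than some other Real twisting on $S^{1,1}$. This reduces to a careful sign audit, tracking how complex conjugation interacts with the twisted convolution at each of the two fixed points so as to produce a real Chan--Paton structure at one and a quaternionic one at the other; the crucial numerical input is the relative sign between $\omega(b,b)=-1$ and $\omega(a,b)=\omega(b,a)=-1$. Once the two Real algebras are seen to be Real-Morita equivalent, applying $KO_j$ immediately yields the stated isomorphism $KO_j(C^*_{\bR}(G,\omega))\cong KR^{-j}_{(+,-)}(S^{1,1})$.
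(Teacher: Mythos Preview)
Your derivation of the relations contains a fatal error. From the cocycle values you read off $U_aU_b=-U_{ab}=U_bU_a$, making the algebra \emph{commutative}. But the paper's own proof begins by declaring that $C^*(G,\omega)$ is generated by unitaries $U,V$ with $V^2=-1$ and $UV=-VU$, and the central extension $\tG=\langle\ta,\tb\mid\ta\tb=\tb^{-1}\ta,\ \tb^4=1\rangle$ described immediately afterward confirms that the generators anticommute. (There may be a slip in how the cocycle values are listed in the paper, but the intended twisted algebra is unambiguous from the extension and from the proof.) With your commutative relations the complex spectrum would be $\bT\times\{\pm i\}$, two disjoint circles, directly contradicting the lemma's first assertion that the spectrum of the complex algebra $C^*(G,\omega)$ is a single $S^1$. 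Your appeal to ``the quotient by the Real involution'' recovers one circle only as the real primitive ideal space, which is not what the lemma asserts. Worse, the Real algebra you end up with is $C(\bT,\bC)$ viewed as a real $C^*$-algebra, whose $KO$-theory is ordinary complex $K$-theory of the circle (period~$2$, $\bZ$ in every degree), not $KR^{-j}_{(+,-)}(S^{1,1})$ (which is period~$4$, matching $KSC$ as in Theorem~\ref{thm:BCZtwist}).

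Once you impose the correct anticommutation, the paper's route is much more concrete than your sketch of the second half. Rather than invoking Real twistings and a ``careful sign audit,'' the paper uses the Mackey machine to parametrize the $2$-dimensional irreducibles by $\bT/\{\pm1\}\cong S^1$, identifies the Real involution as $z\mapsto\bar z$ giving $S^{1,1}$, and then at each of the two fixed points simply writes down the explicit $2\times2$ matrices for $U$ and $V$: at one fixed point they generate $M_2(\bR)$ (an $O^+$-plane), at the other they span $\bH$ (an $O^-$-plane). This hands-on computation is exactly the ``sign audit'' you allude to, and it is short enough that you should carry it out rather than defer to the formalism of \cite{Doran:2013sxa}.
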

\begin{proof}
We begin by noting that $C^*(G,\omega)$ is the universal $C^*$-algebra
on two unitaries $U$ and $V$ satisfying $V^2=-1$ and $UV=-VU$. An
irreducible unitary $\omega$-representation of $G$, when restricted to
$H = \left\langle a\right\rangle$, must be a sum of two unitary
characters $U\mapsto z$ and $U\mapsto -z$, $z\in \bT$, since $V$
conjugates $U$ to $-U$. Then the usual ``Mackey machine'' argument
shows that the representation is induced from one or the other of
these characters, and so the spectrum of the $C^*$-algebra is
naturally identified to the quotient, which is again a circle, 
of $\bT$ by the antipodal map $z\mapsto -z$.

Now the representations $U\mapsto z$, $z\ne 1\text{ or }-1$, 
of $H$ are not defined over $\bR$
and assemble in conjugate pairs to two-dimensional irreducible
representations $U\mapsto \begin{pmatrix}\cos\theta & \sin\theta\\
-\sin\theta& \cos\theta \end{pmatrix}$, $\theta=\arg z$, 
of $H$ over $\bR$. So we see
that the Real involution on the spectrum of $C^*(G,\omega)$ must
interchange $z$ and $\overline z$, giving the Real space $S^{1,1}$.
We need to determine what happens at the fixed points.  At $z=\pm 1$
the associated representation of $G$ is given by
\[
U = \begin{pmatrix}1 & 0\\ 0 & -1 \end{pmatrix},
\quad
V = \begin{pmatrix}0 & 1\\ -1 & 0 \end{pmatrix},
\]
and is defined over $\bR$. Furthermore, these matrices generate
$M_2(\bR)$, which is Morita equivalent to $\bR$.
So this point corresponds to an $O^+$-plane.

When $z=\pm i$ (note that this is a fixed point for complex
conjugation after we divide out by multiplication by $-1$), the 
associated representation of $G$ is given by
\[
U = \begin{pmatrix}i & 0\\ 0 & -i \end{pmatrix},
\quad
V = \begin{pmatrix}0 & 1\\ -1 & 0 \end{pmatrix}.
\]
Then
\[
UV = \begin{pmatrix}0 & i\\ i & 0 \end{pmatrix},
\]
and the $\bR$-span of $I$, $U$, $V$, and $UV$ give a copy of the
quaternions $\bH$. So at this point we get an irreducible real
representation of quaternionic type, and this point corresponds to an
$O^-$-plane.  Now we see that
$C^*_{\bR}(G,\omega)$ is an algebra satisfying the properties of
\cite[Theorem 1]{Doran:2013sxa}, whose topological $K$-theory gives
$KR^{-j}_{(+,-)}(S^{1,1})$. 
\end{proof}

\begin{theorem}
\label{thm:BCZtwist}
The real Baum-Connes isomorphism for the group $G$ defined above,
with twist by the cocycle $\omega$, reduces to
\begin{equation}
\label{eq:BCZtwist}
\mu\co KSC_j\cong
KO_j^{(G, \omega)}(\bR) \xrightarrow{\cong}
KO_j(C^*_{\bR}(G,\omega))\cong KR^{-j}_{(+,-)}(S^{1,1}).
\end{equation}
Upon composition with Poincar\'e duality $KSC^{1-j}\to KSC_j$, this
becomes the isomorphism of twisted $KR$ groups of
\cite[\S4.1]{Doran:2013sxa} that underlies the T-duality between the 
orientifold theories of types $\tI$ and
$\tIA$.
\end{theorem}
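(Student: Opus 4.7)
The plan mirrors Proposition~\ref{prop:typeIS1}: invoke real Baum-Connes for an amenable group, identify the two sides, and post-compose with Poincar\'e duality. Since $G=\bZ\times\bZ/2$ is amenable, the untwisted real Baum-Connes conjecture holds for $G$ by \cite{MR2082090,MR2077669}. For the twisted version, the cocycle $\omega$ defines a central $\bZ/2$-extension $\widetilde G\to G$ (one checks that $\widetilde G\cong\bZ\times\bZ/4$, still amenable), so twisted Baum-Connes for $(G,\omega)$ follows by restricting untwisted Baum-Connes for $\widetilde G$ to the summand on which the central $\bZ/2$ acts by $-1$. Consequently $\mu$ in \eqref{eq:BCZtwist} is an isomorphism, and Lemma~\ref{lem:S1pm} identifies its codomain with $KR^{-j}_{(+,-)}(S^{1,1})$.

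The main computational step is identifying the left-hand side $KO_j^{(G,\omega)}(\bR)$ with $KSC_j$. The classifying space for proper $G$-actions is $\underline{E}G=\bR$, with $\bZ$ translating and $\bZ/2$ trivial, so the twisted equivariant $K$-homology computes as $KO_j$ of the real twisted transformation-group algebra $C_0(\bR,\bR)\rtimes_{G,\omega}G$. Rieffel--Green Morita equivalence along the free $\bZ$-action reduces this to a real $\bZ/2$-twisted algebra over $S^1=\bR/\bZ$; because $V^2=-1$ and $UV=-VU$ (by the Mackey analysis in Lemma~\ref{lem:S1pm}), a direct check identifies this reduced algebra up to Morita equivalence with the mapping torus of complex conjugation,
\[
M_c=\{f\co [0,1]\to\bC \mid f(1)=\overline{f(0)}\},
\]
which is also the real $C^*$-algebra $C_0(S^{0,2},\iota)$. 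The $KO$-theory of $M_c$ follows from the six-term exact sequence for $0\to C_0((0,1),\bC)\to M_c\to\bC\to 0$, whose connecting maps are $\partial_{2k}=1-(-1)^k$ acting on $KU_{2k}=\bZ$; the resulting groups are $KO_j(M_c)\cong(\bZ,\bZ/2,0,\bZ)$ with period $4$, matching $KSC_j$ exactly.

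The Real manifold $S^{0,2}$ is a closed one-dimensional Real spin manifold carrying a natural Dirac class, and Kasparov product with this class furnishes the Poincar\'e duality $\delta\co KSC^{1-j}\to KSC_j$ appearing in the theorem, in direct analogy with Proposition~\ref{prop:typeIS1}. The composite $\mu\circ\delta$ is then the desired map $KSC^{1-j}\to KR^{-j}_{(+,-)}(S^{1,1})$; to match it with the isomorphism obtained ``experimentally'' in \cite[\S4.1]{Doran:2013sxa}, one verifies agreement on generators (coming from Chan-Paton bundles at the $O$-planes), both maps being implemented by Fourier-dual pairings. The main obstacle I anticipate is the Morita reduction to $M_c$: one must carefully track how the cross-term of $\omega$ descends to a complex-conjugation monodromy on the fibre $\bR[V]/(V^2+1)\cong\bC$. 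Once this identification is in hand, the six-term $KO$-theory computation is routine and the T-duality comparison reduces to explicit bookkeeping with generators.
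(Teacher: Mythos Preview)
Your overall strategy matches the paper's: pass to the central extension $\widetilde G$, invoke real Baum--Connes for this amenable group, and identify the $\omega$-summand of the crossed product $C_0^{\bR}(\bR)\rtimes\widetilde G$ with $C(S^{0,2})$ via the complex-conjugation monodromy coming from $UV=-VU$. The paper organizes the iterated crossed product by taking the torsion subgroup $K=\langle\tb\rangle$ first and then $\bZ$; you instead invoke Rieffel--Green along the free $\bZ$-action first. Either order lands on $(C_0^{\bR}(\bR)\otimes\bC)\rtimes\bZ$ with $\ta\cdot f(x)=\overline{f(x+1)}$, Morita equivalent to your $M_c\cong C(S^{0,2})$, so the substance is the same. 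Your explicit six-term computation of $KO_*(M_c)$ is extra detail the paper omits.

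There is, however, a genuine factual error: $\widetilde G$ is \emph{not} isomorphic to $\bZ\times\bZ/4$. The relations $V^2=-1$ and $UV=-VU$ translate in $\widetilde G$ to $\tb^4=1$ and $\ta\tb\ta^{-1}=\tb^{-1}$, so $\widetilde G\cong\bZ/4\rtimes\bZ$ with $\bZ$ acting by inversion---this is the nonabelian solvable group the paper writes down. Your argument survives because you only use amenability of $\widetilde G$, which holds regardless; but the claim as stated is inconsistent with the anticommutation $UV=-VU$ you invoke two sentences later (an abelian $\widetilde G$ would force $UV=VU$ in the twisted algebra). You should correct the identification of $\widetilde G$ and, correspondingly, be more careful about the ``Rieffel--Green then twisted $\bZ/2$'' step: since the $\bZ$- and $\bZ/2$-generators do not commute in the twisted crossed product, reducing along $\bZ$ first leaves a residual $\bZ/2$-action that is not simply a cocycle on $C(S^1)$ but carries the conjugation twist. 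Taking the $\bZ/2$ (or $K$) crossed product first, as the paper does, makes this transparent.
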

\begin{proof}
Note that the central extension of $G$ by $O(1)$ classified by
$\omega$ is $\tG=\left\langle \ta,\tb \mid
\ta\tb=\tb^{-1}\ta,\ \tb^4=1\right\rangle$.  (Here $\ta$ and $\tb$ are
lifts of $a$ and $b$, respectively.) This is a solvable group of the
form $K\rtimes \bZ$, where $K$ is the torsion subgroup, a cyclic
group of order $4$ generated by $\tb$, and the generator $\ta$ of
$\bZ$ acts on $K$ by conjugating $\tb$ to $\tb^3=\tb^{-1}$. So the
left-hand and right-hand sides of the real Baum-Connes map for
$(G,\omega)$ are direct summands in the corresponding sides of the
real Baum-Connes map for $\tG$. Since $\tb^2$ is central of order $2$,
$C^*_{\bR}(\tG)$ splits as $C^*_{\bR}(G,\omega)\oplus C^*_{\bR}(G)$,
where the two summands correspond to representations where $\tb^2$
takes the value $-1$ (resp., $+1$).

The left-hand side of the real Baum-Connes map for $\tG$ is
$KO_j^{\tG}(\bR)$, since $\bR$, with the action where $\ta$ acts by
translation by $1$ and $\tb$ acts trivially, is the universal proper
$\tG$-space. Furthermore, as pointed out by Kasparov \cite[comments
  following Definition 5]{MR1388299}, 
\[
KO_j^{\tG}(\bR) \cong KKO^{-j}(C_0^{\bR}(\bR)\rtimes \tG, \bR),
\]
so we need to understand the structure of the crossed product
\[
C_0^{\bR}(\bR)\rtimes \tG\cong \bigl(C_0^{\bR}(\bR)\rtimes K\bigr)
\rtimes \bZ.
\]
Since $K$ acts trivially on $\bR$, 
\[
C_0^{\bR}(\bR)\rtimes K\cong
C_0^{\bR}(\bR)\otimes C^*_{\bR}(K)\cong C_0^{\bR}(\bR)\otimes
\bigl(\bR\oplus\bR\oplus \bC\bigr).
\]
The two summands of $\bR$ in $C^*_{\bR}(K)$ correspond to the
representations $\tb\mapsto \pm 1$, which are trivial on $\tb^2$.  So
these summands will go to $C^*_{\bR}(G)$ on the right-hand side.  The
summand sent under $\mu$ to $C^*_{\bR}(G,\omega)$ is thus:
\[
\bigl(C_0^{\bR}(\bR)\otimes \bC\bigr) \rtimes \bZ.
\]
But we have to be careful with the action of $\bZ$. On
$C_0^{\bR}(\bR)$, the generator $\ta$ of $\bZ$ acts by translation by
$1$.  But on $\bC$, which corresponds to the representations
$\tb\mapsto \pm i$ of $K$, $\ta$ acts by complex conjugation since
conjugation by $\ta$ sends $\tb$ to $\tb^{-1}$. Thus for $f\in
C_0(\bR)$, $\ta\cdot f(x) = \overline{f(x+1)}$. The crossed product
$C_0(\bR)\rtimes \bZ$ is therefore Morita equivalent to 
\[
\{f\in C(\bR/2\bZ)\mid f(x+1)= \overline{f(x)}\}
\cong C(S^{0,2}),
\]
and $\mu$ reduces to an isomorphism
\[
KR_j(S^{0,2})\to KR^{-j}_{(+,-)}(S^{1,1}).
\]
The left-hand side is isomorphic to $KR^{1-j}(S^{0,2})= KSC^{1-j}$ via
Poincar\'e duality for $S^{0,2}$, and the theorem follows via real
Baum-Connes for solvable groups \cite{MR2082090,MR2077669} and Lemma
\ref{lem:S1pm}. 
\end{proof}

\subsection{$2$-Torus orientifolds}
\label{sec:torus}

In \cite{Gao:2010ava,DMDR}, a study was made of all possible type II
orientifold string theories on $2$-tori. For reasons of supersymmetry,
it was assumed that the $2$-torus $X$ on which spacetime was compactified
is equipped with a complex structure making it into an elliptic curve,
and that the orientifold involution $\iota$ on $X$ is holomorphic in
type IIB and antiholomorphic in type IIA.  It turned out that there
are ten distinct such theories, divided into 3 groupings.  All the
theories in a single grouping are related to one another by sequences
of T-dualities.  Some of these T-dualities had been predicted earlier,
for example in \cite{MR1008294,Witten:1998-02}. 

The first group
contains the type I theory on $T^2$ (this is the IIB orientifold for
the Real space $(T^2, \id_{T^2})$), the IIA orientifold theory on  $T^2$ for an
orientation-reversing involution with fixed set $S^1\amalg S^1$, and
the IIB orientifold theory on  $T^2$ for an orientation-preserving
involution with $4$-point fixed set, each point with positive
$O$-plane charge. The associated Real spaces are $S^{2,0} \times
S^{2,0}$, $S^{1,1} \times S^{2,0}$, and $S^{1,1} \times S^{1,1}$. The
T-dualities within this group can all be easily obtained (by taking products)
from the case of types I and IA on the circle (see Proposition
\ref{prop:typeIS1}), so we will not discuss them further.

The second group consists of the $\tI$ theory (the IIB orientifold on
$S^{2,0}\times S^{0,2}$), the $\tIA$ theory on $S^{1,1}\times S^{2,0}$,
the IIA theory on $S^{1,1}\times S^{0,2}$, and the IIB theory on
$T^2$ for an orientation-preserving 
involution with $4$-point fixed set, where half the fixed points have positive
$O$-plane charge and half have negative $O$-plane charge. The
T-dualities within this group can all be easily obtained (by taking products)
from the case of types $\tI$ and $\tIA$ on the circle (see Theorem
\ref{thm:BCZtwist}), along with the I--IA duality on the circle
(Proposition \ref{prop:typeIS1}), so again we will not discuss them further.

The interesting and subtle case involves the final T-duality
grouping.  There are three theories in this group, the type I theory
with non-trivial $B$-field (called the theory ``without vector
structure'' in \cite{Witten:1998-02}), the IIA theory for an
antiholomorphic involution with fixed set $S^1$ and quotient space a
M\"obius strip, and the IIB orientifold theory for a holomorphic
involution with four fixed points, three of which have positive
$O$-plane charge and one of which has negative $O$-plane charge.
In \cite{Doran:2013sxa,DMDR}, the twisted $KR$-groups for all of these
theories were computed and found to be (abstractly) isomorphic, but no
explicit isomorphisms of twisted $KR$-groups were obtained. We will
now remedy this deficiency.

Let us now switch notation from Section \ref{sec:circle} and let $G$
be a free abelian group on two generators $a$ and $b$, and let
$\nu\in Z^2(G, O(1))$ be the normalized $2$-cocycle on $G$ with $\nu(a,
b) = \nu(b, a) = -1$, $\nu(a, a) = \nu(b, b) = +1$. The
associated central extension of $G$ by $O(1)$ is
$\tG=\left\langle \ta,\tb, c \mid
\ta\tb=c\tb\ta,\ c^2=1\right\rangle$.  (Here $\ta$ and $\tb$ are
lifts of $a$ and $b$, respectively, and $c$, corresponding to $-1\in
O(1)$, is central.) The counterpart to Lemma \ref{lem:S1pm} is the
following:
\begin{lemma}
\label{lem:T2pm}
With $G$ and $\nu$ as just defined, $C^*(G,\nu)$ has spectrum
$T^2$ and 
\[
KO_j(C^*_{\bR}(G,\nu))\cong KR^{-j}_{(+,+,+,-)}(S^{1,1}\times S^{1,1}).
\]
\end{lemma}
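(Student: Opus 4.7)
The plan is to mimic, and lightly upgrade, the proof of Lemma \ref{lem:S1pm}, replacing the one-circle analysis with a two-circle analysis and tracking the real/quaternionic type of the fiber at each of the four fixed points.

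First, I would translate the presentation of $\tG$ into a presentation of $C^*(G,\nu)$: representations of $C^*(G,\nu)$ are exactly representations of $\tG$ sending the central element $c$ to $-1$, so $C^*(G,\nu)$ is the universal $C^*$-algebra generated by two unitaries $U,V$ (images of $\ta,\tb$) with $UV=-VU$. An easy computation shows $U^2$ and $V^2$ are central, so the $*$-subalgebra they generate is a copy of $C^*(\bZ^2)\cong C(T^2)$ inside the center of $C^*(G,\nu)$. Over each point $(z,w)\in T^2$ the fiber is generated by two unitaries $U,V$ with $U^2=z$, $V^2=w$, and $UV=-VU$; this is a central simple $\bC$-algebra of dimension $4$, hence isomorphic to $M_2(\bC)$. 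This identifies $C^*(G,\nu)$ with a continuous-trace algebra with spectrum $T^2$ (in fact with $M_2(C(T^2))$).

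Next I would determine the Real involution on the spectrum. Since $U$ and $V$ come from group elements they are fixed by the antilinear involution $\sigma$ defining the real structure on $C^*(G,\nu)$; thus if $\pi$ is the irreducible representation at $(z,w)$, then $\pi\circ\sigma$ is the one at $(\bar z,\bar w)$. So the Real involution on $\mathrm{Spec}\,C^*(G,\nu)=T^2$ is coordinatewise complex conjugation, exhibiting the Real spectrum as $S^{1,1}\times S^{1,1}$ with the four fixed points $(\pm 1,\pm 1)$.

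The main step, and the one where the subtlety lies, is to determine the Morita type of $C^*_{\bR}(G,\nu)$ near each of the four fixed points, i.e.\ to decide at each point whether the irreducible representation is of real or quaternionic type. I would do this by exhibiting explicit $2\times 2$ matrices:
\begin{itemize}
\item At $(1,1)$, take $U=\mathrm{diag}(1,-1)$, $V=\bigl(\begin{smallmatrix}0&1\\1&0\end{smallmatrix}\bigr)$; the $\bR$-span of $I,U,V,UV$ is $M_2(\bR)$, Morita equivalent to $\bR$, giving an $O^+$-plane.
\item At $(-1,1)$, take $U=\bigl(\begin{smallmatrix}0&1\\-1&0\end{smallmatrix}\bigr)$, $V=\mathrm{diag}(1,-1)$; these are real and generate $M_2(\bR)$, giving $O^+$.
\item At $(1,-1)$, the same construction with the roles of $a,b$ swapped gives $O^+$.
\item At $(-1,-1)$, all three of $U$, $V$, $UV$ square to $-I$ and pairwise anticommute, so the $\bR$-span of $I,U,V,UV$ is a copy of the quaternions $\bH$, giving an $O^-$-plane.
\end{itemize}
The delicate point is the one at $(-1,1)$ and $(1,-1)$: a naive complex-diagonal choice like $U=\mathrm{diag}(i,-i)$ looks quaternionic, but a change of basis produces a genuine real form, so one must not be misled by the apparent obstruction. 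Once these four local types are in hand, $C^*_{\bR}(G,\nu)$ satisfies the hypotheses of \cite[Theorem 1]{Doran:2013sxa} with decoration $(+,+,+,-)$, and the identification $KO_j(C^*_{\bR}(G,\nu))\cong KR^{-j}_{(+,+,+,-)}(S^{1,1}\times S^{1,1})$ follows, exactly as in the one-circle case.
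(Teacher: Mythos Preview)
Your proposal is correct and follows essentially the same approach as the paper's proof: identify $C^*(G,\nu)$ as the universal $C^*$-algebra on anticommuting unitaries (the rational noncommutative torus $A_{1/2}$), read off the spectrum $T^2$ from the central subalgebra generated by $U^2,V^2$, determine the Real involution as coordinatewise conjugation, and then check the local type at each of the four fixed points by exhibiting explicit $2\times 2$ matrices, finally invoking \cite[Theorem~1]{Doran:2013sxa}. The only cosmetic differences are that your derivation of the Real involution (via $\sigma$-invariance of $U,V$) is slightly slicker than the paper's eigenvalue argument, and at $(-1,-1)$ you argue abstractly from the quaternion relations rather than writing down the matrices $U=\mathrm{diag}(i,-i)$, $V=\bigl(\begin{smallmatrix}0&1\\-1&0\end{smallmatrix}\bigr)$ as the paper does.
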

\begin{proof}
We begin by computing the complex $C^*$-algebra $C^*(G,\nu)$. This
is the free $C^*$-algebra on two unitary generators $U$ and $V$
satisfying the commutation rule $UV=-VU$. In other words, it is just
the (rational) noncommutative torus $A_{1/2}$. This has spectrum $T^2$
and is the algebra of sections of a stably trivial, locally trivial bundle of
algebras over $T^2$, with fibers isomorphic to $M_2(\bC)$
(see for instance \cite{MR636190,zbMATH03857768}). The center
of $C^*(G,\nu)$ is generated by $U^2$ and $V^2$, which together
generate a copy of $C(T^2)$. If $z,w\in \bT$, then in an irreducible
representation with $U^2\mapsto z$ and $V^2\mapsto w$, the eigenvalues
of $U$ are $\pm z^{1/2}$ and the eigenvalues of $V$ are $\pm
w^{1/2}$. When either $z^{1/2}$ or $w^{1/2}$ is non-real, to get a
real representation of $C^*_{\bR}(G,\nu)$ we need to take the
eigenvalues of $U$ to be $\pm z^{1/2}, \pm \overline z^{1/2}$ 
and the eigenvalues of $V$ to be $\pm w^{1/2},\pm \overline
w^{1/2}$. So from this analysis, we can see that the underlying Real
space of $C^*_{\bR}(G,\nu)$ must be $S^{1,1}\times S^{1,1}$. It
remains to compute the $O$-plane charges at the fixed points, where
$z=\pm 1$ and $w=\pm 1$. If $z=1$ and $w=1$, we have an irreducible
representation given by
\[
U = \begin{pmatrix}1 & 0\\ 0 & -1 \end{pmatrix},
\quad
V = \begin{pmatrix}0 & 1\\ 1 & 0 \end{pmatrix},
\]
and this is defined over $\bR$. Furthermore, these matrices generate
$M_2(\bR)$, which is Morita equivalent to $\bR$.
So this point corresponds to an $O^+$-plane.  If $z=1$ and $w=-1$ (or
the other way around---these cases are symmetrical), then we have an
irreducible representation given by 
\[
U = \begin{pmatrix}1 & 0\\ 0 & -1 \end{pmatrix},
\quad
V = \begin{pmatrix}0 & 1\\ -1 & 0 \end{pmatrix},
\]
and this is defined over $\bR$. Furthermore, these matrices generate
$M_2(\bR)$, which is Morita equivalent to $\bR$.
So again these points correspond to $O^+$-planes. But if $z=w=-1$,
then just as in the proof of Lemma \ref{lem:S1pm}, we have an
irreducible representation generated by
\[
U = \begin{pmatrix}i & 0\\ 0 & -i \end{pmatrix},
\quad
V = \begin{pmatrix}0 & 1\\ -1 & 0 \end{pmatrix}.
\]
These matrices generate a copy of the quaternions $\bH$, and we have
an $O^-$-plane. So $C^*_{\bR}(G,\nu)$ is a real continuous-trace
algebra with spectrum $S^{1,1}\times S^{1,1}$ and with three
$O^+$-planes, one $O^-$-plane. This algebra satisfies all the
properties of \cite[Theorem 1]{Doran:2013sxa}, and its topological
$K$-theory gives $KR^{-j}_{(+,+,+,-)}(S^{1,1}\times S^{1,1})$.
\end{proof}
\begin{remark}
The reader used to the theory of the Hilbert symbol will note that the
sign choice $(+,+,+,-)$ arose here from the fact that if we compute
the Hilbert symbol $(z,w)$ for all choices $z,w\in \{\pm 1\}$, then it
takes the value $+1$ when either $z$ or $w$ is positive and takes the
value $-1$ exactly when $z=w=-1$.
\end{remark}
\begin{theorem}
\label{thm:BCZ2twist}
The real Baum-Connes isomorphism for the group $G$ defined above,
with twist by the cocycle $\nu$, reduces to
\begin{equation}
\label{eq:BCZ2twist}
\mu\co KO_j(T^2, w)\cong
KO_j^{(G, \nu)}(\bR^2) \xrightarrow{\cong} KO_j(C^*_{\bR}(G,\nu))\cong 
KR^{-j}_{(+,+,+,-)}(S^{1,1}\times S^{1,1}).
\end{equation}
Here $w$ is the nontrivial element of $H^2(T^2,\bZ/2)$ and $KO_*(T^2,
w)$ denotes the associated twisted $KO$-homology.
Upon composition with Poincar\'e duality 
\[KO^{2-j}(T^2, w) \to KO_j(T^2, w),
\] 
this becomes the isomorphism of twisted $KR$ groups of
\cite[\S4.2 and \S5]{Doran:2013sxa} that underlies the T-duality between the 
``type I theory without vector structure'' and the IIB
orientifold theory on $T^2$ with four fixed points, three of them
$O^+$-planes and one of them an $O^-$-plane.
\end{theorem}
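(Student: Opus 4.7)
The plan is to follow the strategy used for Theorem \ref{thm:BCZtwist}, now applied to the central extension $1 \to \langle c\rangle \to \tG \to G \to 1$ of $G=\bZ^2$ corresponding to $\nu$. Note that $\tG$ is nilpotent: $[\ta,\tb]=c$ generates the center $\langle c\rangle\cong\bZ/2$, so $\tG$ is a quotient of the integer Heisenberg group and is therefore amenable, and the real Baum-Connes conjecture holds for $\tG$ by \cite{MR2082090,MR2077669}. The universal proper $\tG$-space is $\bR^2$, with $c$ acting trivially and $\ta,\tb$ acting by the standard translations, so that the action factors through $G=\bZ^2$, which acts freely and cocompactly. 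By Kasparov's formula the left-hand side of Baum-Connes is $KO_j^{\tG}(\bR^2)\cong KKO^{-j}(C_0^{\bR}(\bR^2)\rtimes\tG,\bR)$.

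Because $c$ is central of order two and acts trivially on $\bR^2$, both sides of Baum-Connes for $\tG$ split according to the value of $c$. On the group-algebra side,
\[
C^*_{\bR}(\tG)\cong C^*_{\bR}(G)\oplus C^*_{\bR}(G,\nu),
\]
the two summands corresponding respectively to representations with $c\mapsto +1$ (hence $\ta\tb=\tb\ta$) and with $c\mapsto -1$ (hence $\ta\tb=-\tb\ta$, i.e.\ $\nu$-projective representations of $G$). The same decomposition applies to the crossed product $C_0^{\bR}(\bR^2)\rtimes\tG$, the $c=-1$ summand being the twisted crossed product $C_0^{\bR}(\bR^2)\rtimes_\nu\bZ^2$. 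Consequently the Baum-Connes isomorphism for $\tG$ restricts to an isomorphism
\[
KO_j\bigl(C_0^{\bR}(\bR^2)\rtimes_\nu\bZ^2\bigr) \xrightarrow{\cong} KO_j(C^*_{\bR}(G,\nu)).
\]

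It remains to identify each side geometrically. On the right, Lemma \ref{lem:T2pm} gives $KO_j(C^*_{\bR}(G,\nu))\cong KR^{-j}_{(+,+,+,-)}(S^{1,1}\times S^{1,1})$. On the left, since $\bZ^2$ acts freely and properly on $\bR^2$ with quotient $T^2$, the twisted crossed product $C_0^{\bR}(\bR^2)\rtimes_\nu\bZ^2$ is Morita equivalent to a real continuous-trace algebra over $T^2$ whose class in $H^2(T^2,\bZ/2)$ is the descent of $\nu$. Because $G=\bZ^2$ acts freely with classifying space $BG=T^2$, this descent map is an isomorphism, so $\nu$ maps to the nontrivial generator $w\in H^2(T^2,\bZ/2)$. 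Hence $KO_j^{(G,\nu)}(\bR^2)\cong KO_j(T^2,w)$, and combining with the right-hand identification gives \eqref{eq:BCZ2twist}. The final step is to compose with twisted Poincaré duality for the spin manifold $T^2$, given analytically by Kasparov product with the twisted Dirac class, which produces an isomorphism $KO^{2-j}(T^2,w)\to KO_j(T^2,w)$ and yields the desired isomorphism of twisted $KR$-groups.

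The main obstacle is this third step: rigorously verifying that $C_0^{\bR}(\bR^2)\rtimes_\nu\bZ^2$ is Morita equivalent to a real continuous-trace algebra over $T^2$ with twist $w$. The complex analogue is comparatively easy because $H^3(T^2,\bZ)=0$ forces the Dixmier-Douady obstruction to vanish, but the real case involves a $\bZ/2$-valued twist realized by $M_2(\bR)$-fibers at three of the four $O$-planes and an $\bH$-fiber at the fourth, as recorded in Lemma \ref{lem:T2pm}. One must also check that the resulting composite agrees, up to sign conventions, with the explicit T-duality isomorphism of \cite[\S4.2, \S5]{Doran:2013sxa}; this should follow by tracking generators through the chain of Morita equivalences and assembly maps.
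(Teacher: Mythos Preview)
Your argument is correct and follows essentially the same route as the paper: split off the $c=-1$ summand of Baum--Connes for $\tG$, identify the left side as the $KO$-homology of the twisted crossed product $C_0^{\bR}(\bR^2)\rtimes_\nu\bZ^2$, recognize this as a real continuous-trace algebra over $T^2$ with nontrivial $\bZ/2$-twist $w$, invoke Lemma~\ref{lem:T2pm} on the right, and finish with Poincar\'e duality. The paper resolves what you flag as the ``main obstacle'' by invoking the real analogue of results of Wassermann and Raeburn--Williams, which identify the real Dixmier--Douady class of such a crossed product with the image of $[\nu]\in H^2(G,O(1))$ in $H^2(BG,\bZ/2)$.

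One small confusion in your closing paragraph: the $M_2(\bR)$/$\bH$ fibers at $O^\pm$-planes described in Lemma~\ref{lem:T2pm} pertain to $C^*_{\bR}(G,\nu)$ over the Real space $S^{1,1}\times S^{1,1}$, i.e.\ the \emph{right-hand} side of the assembly map. The left-hand algebra $C_0^{\bR}(\bR^2)\rtimes_\nu\bZ^2$ lives over $T^2$ with \emph{trivial} involution, and its twist $w\in H^2(T^2,\bZ/2)$ is a genuine real Dixmier--Douady obstruction (a nontrivial bundle of Azumaya algebras), not a sign-choice datum at fixed points. These are different flavors of ``real twist,'' and conflating them obscures rather than clarifies the identification of $w$.
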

\begin{proof}
Once again, we use the fact that the left-hand side of the real
Baum-Connes conjecture for $(G,\nu)$ is a direct summand in the
left-hand side of the real Baum-Connes conjecture for $\tG$.
The universal proper $\tG$-space is $\bR^2$, with $\ta$ acting by
translation by $(1,0)$ and $\tb$ acting by
translation by $(0,1)$. So, again following \cite[comments
  following Definition 5]{MR1388299}, 
\[
KO_j^{\tG}(\bR^2) \cong KKO^{-j}(C_0^{\bR}(\bR^2)\rtimes \tG, \bR),
\]
and we need to understand the structure of the crossed product
\[
C_0^{\bR}(\bR^2)\rtimes \tG\cong 
\bigl(C_0^{\bR}(\bR^2)\rtimes G\bigr) \oplus
\bigl(C_0^{\bR}(\bR^2)\rtimes_\nu G\bigr).
\]
Here the first summand $C_0^{\bR}(\bR^2)\rtimes G$, where $c\in \tG$
acts by $+1$,  corresponds to $C^*_{\bR}(G)$
on the right-hand side of the Baum-Connes conjecture, and the other
summand (the one we are interested in, where $c$ acts by $-1$)
corresponds to $C^*_{\bR}(G,\nu)$. This summand
$C_0^{\bR}(\bR^2)\rtimes_\nu G$ 
has complexification Morita equivalent to $C(T^2)$, since $G$ acts
freely on $\bR^2$ with quotient $T^2$, and since the complex
Dixmier-Douady class has to vanish since $H^3(T^2,\bZ)=0$.
So $C_0^{\bR}(\bR^2)\rtimes_\nu G$ is a real continuous-trace
algebra with spectrum $T^2$, with trivial involution.

We still need to compute the \emph{real} Dixmier-Douady class. 
The calculation is the exact analogue of a result of
Wassermann \cite[Theorem 5]{MR2631479} and
Raeburn-Williams \cite[Theorem 4.1]{MR768739}
(see also \cite[Remark on pp.\ 26--27]{MR920145}) in the complex case,
except that we have to replace $U(1)=\bT$ by $O(1)$.
The result says that this class $w\in H^2(\bR^2/\bZ^2, \bZ/2)$ is
exactly the image of the class of $\nu$ in $H^2(BG, O(1))$, so it's
the nontrivial element of $H^2(T^2, \bZ/2)$.  Thus, via an application
of Lemma \ref{lem:T2pm}, the real
Baum-Connes isomorphism for $(G, \nu)$ reduces to
\[
KO_j(T^2, w) \xrightarrow{\cong} KO_j(C^*_{\bR}(G,\nu))\cong
KR^{-j}_{(+,+,+,-)}(S^{1,1}\times S^{1,1}). 
\]
Composing with Poincar\'e duality, we get the desired isomorphism
\[
KO^{2-j}(T^2, w) \xrightarrow{\cong} KR^{-j}_{(+,+,+,-)}(S^{1,1}\times
S^{1,1}). 
\]
The degree shift by $2$ was explained in \cite{DMDR} by the fact that
the associated type IIB orientifold string theories differ by a double
T-duality. 
\end{proof}

\subsection{The real Phillips-Raeburn obstruction}
\label{sec:PR}

Before we get to the last case, we need a short digression on the real
Phillips-Raeburn obstruction, which might be known to some people in
the noncommutative geometry community but doesn't seem to be
discussed in the literature. For simplicity, we will just deal with
the case of automorphisms of $C_0(X,\cK_{\bR})$, where $X$ is a second
countable locally compact Hausdorff space, which in our case of
interest will have the homotopy type of a finite CW complex,
and $\cK_{\bR}$ denotes the
compact operators on an infinite-dimensional separable real Hilbert
space. Since every automorphism of $\cK_{\bR}$ is inner, $\Aut
\cK_{\bR}\cong PO=O/\{\pm 1\}$, the infinite-dimensional projective
orthogonal group.  Here $O$ is given the strong (or weak) operator
topology and is contractible, for the same reason as the infinite
unitary group $U$ \cite[Lemme 3]{MR0163182}.

In the complex case, recall \cite[\S5.4]{MR1634408} that
$\Aut_X C_0(X,\cK)$, the spectrum-fixing automorphisms, consists of
\emph{locally inner} automorphisms, and that the obstruction to a
locally inner automorphism being inner is given by the
Phillips-Raeburn obstruction in $H^2(X, \bZ)$ (\cite{MR589649},
\cite[Theorem 5.42]{MR1634408}). This obstruction is easy to describe:
a locally inner automorphism $\theta$ of $\Aut_X C_0(X,\cK)$ can be
viewed as a continuous map $\theta\co X\to \Aut \cK\cong PU$, and
since the projective unitary group is a classifying space for $\bT$,
and thus has the homotopy type of a $K(\bZ,2)$ space,
the homotopy class $[\theta]$ of $\theta$ gives a class in $H^2(X, \bZ)$.
Vanishing of this class is the obstruction to lifting the map $\theta$
to a map $X\to U$, and thus to $\theta$ being inner.

There is another way of understanding the Phillips-Raeburn
obstruction \cite{MR749160}, which is also relevant. An automorphism
$\theta$ is the 
same thing as an action of $\bZ$, and if $\theta$ is spectrum-fixing,
$C_0(X,\cK)\rtimes_\theta \bZ$ is a continuous-trace algebra with
spectrum $Y$ which is a principal $S^1$-bundle over $X$, the 
$S^1$-action coming from the dual action of $\bT=\widehat{\bZ}$ on the
crossed product. The class
$[\theta]\in H^2(X, \bZ)$ is precisely the Chern class of this bundle.

Now let's consider the real case. The proof that every spectrum-fixing
automorphism of $C_0(X,\cK_{\bR})$ is locally inner is exactly the
same as in the complex case. A locally inner automorphism $\theta$ of
$\Aut_X C_0(X,\cK_{\bR})$ can be 
viewed as a continuous map $\theta\co X\to \Aut \cK_{\bR}\cong PO$, and
since the projective orthogonal group is a classifying space for $O(1)$,
and thus has the homotopy type of a $K(\bZ/2,1)$ space,
the homotopy class $[\theta]$ of $\theta$ gives a class in $H^1(X, \bZ/2)$.
Vanishing of this class is the obstruction to lifting the map $\theta$
to a map $X\to O$, and thus to $\theta$ being inner.

As in the complex case, we can also describe the real Phillips-Raeburn
obstruction as the class of a bundle obtained from the crossed
product. Since the map $PO\to PU$ is null-homotopic, if we complexify,
a locally inner automorphism $\theta$ becomes inner and 
$C_0(X,\cK)\rtimes_{\theta_\bC} \bZ$ is isomorphic to $C_0(X,\cK)\otimes
C^*(\bZ) \cong C_0(X\times S^1,\cK)$. In particular, the associated
$S^1$-bundle $X\times S^1\to X$ is trivial.  However, the triviality
comes from the fact that we have forgotten the real structure. The
spectrum of $C_0(X,\cK_{\bR})\rtimes_\theta \bZ$ is a Real space, and
over an open set $U\subseteq X$ where $\theta$ is inner,
this Real space is $U\times S^{1,1}$, since $C^*_{\bR}(\bZ)\cong
C(S^{1,1})$. Globally, $\bigl(C_0(X,\cK_{\bR})\rtimes_\theta
\bZ\bigr)\,\widehat{}\,$ is a bundle $Y\to X$ of Real spaces over $X$, with
fibers $S^{1,1}$, which is trivial after forgetting the Real
structure. The fixed set for the involution on the total space of the
bundle is thus a principal $\bZ/2$-bundle over $X$, i.e., a regular
double cover, and the real Phillips-Raeburn obstruction is the 
characteristic class of this cover in $H^1(X, \bZ/2)$.
To see this, observe that if $\theta$ is inner, then the
crossed product by $\theta$ is isomorphic to $C_0(X,\cK_{\bR})\otimes
C^*_{\bR}(\bZ)$, which has spectrum $X\times S^{1,1}$ (as a Real
space), and the bundle is trivial. In the other direction, suppose the
$\bZ/2$-bundle over $X$ is trivial. That means there is a global
Real section of $Y\to X$, which we can identify with a continuous
family of real representations of the crossed product. Since we
already know that $\theta_{\bC}$ is inner, $\theta$ is implemented by
some strongly continuous $u\co X\to U(\cH_\bR \otimes \bC)$ which is orthogonal
modulo center, i.e., by some $u\co X\to O\cdot \bT$. Since we have a
global family of real representations, we can take $u$ to be pointwise
orthogonal, and thus the real Phillips-Raeburn obstruction vanishes.
So vanishing of the real Phillips-Raeburn obstruction in
$H^1(X,\bZ/2)$ is necessary and sufficient for vanishing of the
$\bZ/2$-bundle over $X$, and so it must correspond precisely to the
usual characteristic class of this bundle, which is the obstruction to
triviality of the bundle of spectra in the category of Real spaces.

\subsection{The species 1 IIA theory}
\label{sec:spec1}

There is one case left to handle, the one which in many respects is
the most subtle.  This is the T-duality between the type IIA
orientifold theory associated to a species $1$ 
real elliptic curve\footnote{For a smooth irreducible 
  projective curve $X$ defined
  over $\bR$, the set of complex points is a connected compact Riemann
  surface and the set of real points is a topologically a finite
  disjoint union of circles. The \emph{species} is the number of
  components of the set of real points.}
and the type I theory twisted by a nontrivial $B$-field.

The species $1$ IIA theory, from the point of view of $KR$-theory,
corresponds to a Real space $(T^2,\iota)$ in the sense of Atiyah, where the
underlying topological space is $T^2$, and the involution $\iota$ is
orientation-reversing, with fixed set $S^1$ and quotient space a
M{\"o}bius strip. As explained in
\cite[\S2 and \S5.2.3]{DMDR}, $(T^2, \iota)$ has a concrete
realization as $(\bC/\Lambda, z\mapsto \bar z)$, where $\Lambda$ is
the lattice in $\bC$ generated by $1$ and by $\tau=\frac12 + i\tau_2$,
where $\tau_2>0$. What makes this case tricky is that $\Re\tau =
\frac12$, so the lattice $\Lambda$ is \emph{not} rectangular. Note,
however, that we have a sublattice $\Lambda'\subset \Lambda$ of index
$2$, where $\Lambda'$ is generated by $1$ and $2\tau$, or equivalently
by $1$ and $2i\tau_2$. So $\Lambda'$ \emph{is} rectangular. The
involution $z\mapsto \bar z$ on $\bC/\Lambda'$ gives the Real space
$S^{2,0}\times S^{1,1}$ studied above in Section \ref{sec:torus}. So
$(T^2,\iota)$ has $S^{2,0}\times S^{1,1}$ as a double cover.

The desired T-duality will involve real Baum-Connes with coefficients
for the group $\bZ$. This conjecture is stated in \cite[\S9]{MR1292018}
(with the substitution of $KO$ for $KU$), and is in fact a theorem in
the case $G=\bZ$, by \cite{MR2077669}, for example. First we need to explain
what the coefficient $\bZ$-algebra is.

Let $A = C(S^1, M_2(\bR))$, which as a real $C^*$-algebra is obviously
Morita equivalent to $C^{\bR}(S^1)$, associated to the Real space
$S^{2,0}$. We now equip this with an interesting action of $\bZ$ as
follows. Identify $z=e^{2\pi i t}\in \bT$ with the rotation matrix
\[
r(z)= \begin{pmatrix}\cos 2\pi t & \sin 2\pi t\\ -\sin 2\pi t & \cos
  2\pi t 
\end{pmatrix} \in SO(2),
\]
and define $\theta\in \Aut A$ to be given by
\[
(\theta f)(z) = r(z^{1/2})f(z)r(z^{1/2})^{-1}.
\]
At first sight this seems ambiguous since $z\in \bT$ has two square
roots, $\pm e^{\pi i t}$, but since they differ by $-1\in Z(O(2))$, the inner
automorphisms they define are the same and so $\theta$ is well
defined, regardless of what choice one makes for the square root.

The interesting feature of the automorphism $\theta$ is that it is
\emph{locally inner} but not inner. From the discussion in Section
\ref{sec:PR} above, it should be clear that $\theta$ has nontrivial
real Phillips-Raeburn class in $H^1(S^1,\bZ/2)$.

\begin{lemma}
\label{lem:AZ}
With $A$ and the automorphism $\theta$ as just defined, the crossed
product $A\rtimes_\theta \bZ$ is strongly Morita equivalent to
the commutative real $C^*$-algebra associated to the Real space $(T^2,
\iota)$, where $\iota$ is the orientation-reversing involution
associated to the species $1$ IIA elliptic curve orientifold theory.
\end{lemma}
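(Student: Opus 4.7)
My plan is to invoke the real Phillips--Raeburn analysis of Section~\ref{sec:PR} to compute the spectrum of $A\rtimes_\theta\bZ$ as a Real space, identify it with the species~$1$ model $(T^2,\iota)$, and then verify that no additional Real continuous-trace obstruction appears. The first step is to observe that $\theta$ has \emph{nontrivial} real Phillips--Raeburn class in $H^1(S^1,\bZ/2)\cong\bZ/2$: viewed as a map $\theta\co S^1\to\Aut\cK_{\bR}\simeq PO\simeq K(\bZ/2,1)$, it sends $z$ to $[r(z^{1/2})]$, and as $z$ traverses $S^1$ once, $z^{1/2}$ traverses only half of $S^1$, so $r(z^{1/2})$ lifts to a path in $O(2)$ from $I$ to $-I$, producing a non-nullhomotopic loop in $PO$. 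By the discussion in Section~\ref{sec:PR}, the spectrum of $A\rtimes_\theta\bZ$ is then the total space $Y$ of a bundle $p\co Y\to S^1$ of Real spaces with fibre $S^{1,1}$, trivial after forgetting the Real structure (so $Y\cong T^2$ topologically), whose $\iota_Y$-fixed set is the nontrivial regular double cover of $S^1$, hence a connected circle.

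Next I would identify this Real space $(Y,\iota_Y)$ with the species~$1$ model $(T^2,\iota)=(\bC/\Lambda,z\mapsto\bar z)$ from the start of Section~\ref{sec:spec1}. The base $S^1$ has trivial involution and the fibrewise involution reverses orientation of $S^{1,1}$, so $\iota_Y$ is orientation-reversing with fixed set a single circle; these properties characterise the species~$1$ involution on $T^2$ up to diffeomorphism. As an explicit sanity check, pulling $p\co Y\to S^1$ back along the nontrivial double cover $\tilde S^1\to S^1$, $w\mapsto z=w^2$, trivialises the bundle to $\tilde S^1\times S^{1,1}=S^{2,0}\times S^{1,1}$ with deck transformation $(w,\zeta)\mapsto(-w,-\zeta)$, matching the paper's observation in Section~\ref{sec:spec1} that $(T^2,\iota)$ has $S^{2,0}\times S^{1,1}$ as a double cover.

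The final step is to promote this identification of spectra to a genuine strong Morita equivalence $A\rtimes_\theta\bZ\sim C^{\bR}(Y,\iota_Y)$. Locally, on each contractible $U\subset S^1$ where $\theta|_U=\Ad(v_U)$ for some unitary $v_U\in A|_U$, the substitution $V'=v_U^{-1}V$ yields $A|_U\rtimes_\theta\bZ\cong A|_U\otimes C^*_{\bR}(\bZ)$, which is Morita equivalent to $C^{\bR}(U\times S^{1,1})$, and these glue along the $\{\pm1\}$-valued transition cocycle representing $[\theta]$. At each $\iota_Y$-fixed point, the irreducible covariant representation of $(A,\bZ,\theta)$ has $\pi(V)=\pm r(z^{1/2})\in SO(2)\subset M_2(\bR)$ and hence is a two-dimensional \emph{real} representation, so every fixed point is an $O^+$-plane and no quaternionic fibres arise; together with the vanishing of the complex Dixmier--Douady class (since $H^3(T^2,\bZ)=0$), this places $A\rtimes_\theta\bZ$ among the algebras covered by \cite[Theorem~1]{Doran:2013sxa} with trivial twist, delivering the desired Morita equivalence. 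The hardest part will be this last step, namely ruling out any Real continuous-trace obstruction beyond what is captured by the real Phillips--Raeburn class; I would handle it by explicitly assembling the global bimodule from the local Morita equivalences above, glued using the $\{\pm1\}$-valued transition cocycle, rather than by appealing to a black-box classification result.
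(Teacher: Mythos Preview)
Your argument is correct, and it reaches the same conclusion as the paper by a genuinely different route. The paper takes the crossed product in stages: since $[\theta]$ is $2$-torsion, $\theta^2$ is inner, so $A\rtimes_{\theta}2\bZ$ is Morita equivalent to $A\otimes C^*_{\bR}(2\bZ)$, with associated Real space $S^{2,0}\times S^{1,1}$; the full crossed product is then obtained (via the Packer--Raeburn stabilisation trick) as a further $\bZ/2$-quotient, and the paper identifies the deck transformation with translation by $\tau$ in the explicit lattice model $\bC/\Lambda'\to\bC/\Lambda$. You instead read off the spectrum of $A\rtimes_\theta\bZ$ directly from the Real Phillips--Raeburn picture of Section~\ref{sec:PR} as an $S^{1,1}$-bundle over $S^{2,0}$ with connected fixed-set double cover, and then invoke the classification of orientation-reversing involutions on $T^2$ by the topology of their fixed set. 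Your approach is a bit cleaner in that it avoids Packer--Raeburn and the two-step decomposition, and it makes the absence of any residual Real continuous-trace twist explicit (the $\{\pm1\}$-valued transition cocycle acts trivially on the $\cK_{\bR}$ factor, so the local Morita bimodules glue globally; your $O^+$-plane check is a consistent sanity check but is not strictly needed once you observe this). The paper's approach, on the other hand, ties the identification more transparently to the concrete elliptic-curve description $\bC/\Lambda$ already set up at the start of Section~\ref{sec:spec1}, and your double-cover sanity check with deck transformation $(w,\zeta)\mapsto(-w,-\zeta)$ is exactly the bridge between the two pictures.
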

\begin{proof}
We take the crossed product in stages.  Since the Phillips-Raeburn
obstruction of $\theta$ is $2$-torsion, the Phillips-Raeburn
obstruction of $\theta^2$ vanishes, the subgroup $2\bZ\subset \bZ$
acts on $A$ by inner automorphisms, and $A\rtimes 2\bZ$ is Morita
equivalent to $A\otimes C^*_{\bR}(2\bZ)$, a stably commutative real $C^*$-algebra
corresponding to the Real space $S^{2,0}\times S^{1,1}$. The
crossed product $A\rtimes_\theta  \bZ$ is then obtained 
(up to strong Morita equivalence) by taking a further
crossed product by the quotient group $\bZ/2\bZ$.  (Use the
Packer-Raeburn trick of \cite{MR1002543}.) We will see shortly that
taking this further crossed product amounts to dividing $S^{2,0}\times
S^{1,1}$ by a free involution.

From the discussion in Section \ref{sec:PR} above, 
the crossed product $A\rtimes_\theta \bZ$ is Morita equivalent to an
abelian real $C^*$-algebra, associated to a Real space $(X,\iota)$.
This space  $(X,\iota)$ has as double cover the Real space associated
to  $A\rtimes_\theta 2\bZ$, or $S^{2,0}\times S^{1,1}$. But by
nontriviality of the Phillips-Raeburn invariant of $\theta$, the natural
projection map $(X,\iota)\to S^{2,0}$ does not have a Real
section. However, we know that just as a bundle of topological spaces
(i.e., forgetting the Real structure), $X$ is a trivial $S^1$-bundle
over $S^1$, and is thus a $2$-torus.  Furthermore,
the inclusion $2\bZ\subset \bZ$ induces a map $C^*(2\bZ)\to C^*(\bZ)$
or $C(S^1)\to C(S^1)$ which is dual to the covering map $z\mapsto z^2$
on $\bT$. So the double 
covering $S^{2,0}\times S^{1,1} \to (X,\iota)$ is nontrivial on
the $S^{1,1}$ factor. But it is also nontrivial on the $S^{2,0}$
factor, because of the nontriviality of the Phillips-Raeburn
invariant. From our previous description of the 
covering map $\bC/\Lambda'\to \bC/\Lambda$, we recognize $(X,\iota)$
as being associated to the species $1$ real elliptic curve.
\end{proof}
\begin{theorem}
The isomorphism $KO^{1-j}(T^2, w)\to KO^{-j}(X,\iota)$ associated to
the T-duality between the ``type I theory without vector structure''
and the IIA orientifold associated to a species $1$ real elliptic
curve can be obtained as the composition
\begin{multline*}
KO^{1-j}(T^2, w)\xrightarrow[\cong]
{\text{\textup{Theorem \ref{thm:BCZ2twist}}}}
KKO_{j+1}(C_0^{\bR}(\bR^2)\rtimes_\nu\bZ^2,\bR)\\
\xrightarrow[\cong]{} 
KKO_{j+1}\left(\left(C_0(\bR)\otimes A\right)\rtimes \bZ,\bR\right)
\xrightarrow[\cong]{\text{\textup{Poinc.\ d.}}}
KKO^{\bZ}_j(C_0(\bR), A) \\
\xrightarrow[\cong]{\text{\textup{Baum-Connes}}}
KO_j(A\rtimes_\theta \bZ) \xrightarrow[\cong]{\text{\textup{Lemma
    \ref{lem:AZ}}}} KR^{-j}(T^2, \iota).
\end{multline*}
\label{thm:spec1}
\end{theorem}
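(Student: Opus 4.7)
The plan is to verify that each arrow in the six-term composition is an isomorphism and, by naturality, that the composite coincides with the T-duality isomorphism of \cite[\S5.2.3]{DMDR}. The first arrow is Theorem \ref{thm:BCZ2twist} with $j$ replaced by $j+1$; the fifth is Lemma \ref{lem:AZ}. The third arrow is Kasparov's Poincar\'e duality for $\bR$ as a one-dimensional spin $\bZ$-manifold, accounting for the degree shift $j+1\mapsto j$; the fourth is the real Baum-Connes assembly map for the amenable group $\bZ$ with coefficient algebra $A$, an isomorphism by \cite{MR2077669}. The substantive content lies in the unlabeled second arrow, which I shall obtain from a Morita equivalence
\[
C_0^\bR(\bR^2) \rtimes_\nu \bZ^2 \;\sim\; (C_0(\bR) \otimes A) \rtimes \bZ,
\]
with the $\bZ$-action on the right being translation on $\bR$ combined with $\theta$ on $A$.

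To prove this Morita equivalence, my plan is a staged Mackey decomposition. First, replace the twisted product on the left by the $c=-1$ summand of the untwisted $C_0^\bR(\bR^2) \rtimes \tG$, where $\tG$ is the central extension of $\bZ^2$ by $\bZ/2$ introduced before Lemma \ref{lem:T2pm}. Then take the crossed product by $\tG$ in two stages, using the normal abelian subgroup $\tK = \langle \tb, c \rangle \cong \bZ \times \bZ/2$: since $\tK$ acts on $\bR^2$ only through $\tb$ (translation in the second coordinate) and $c$ acts trivially, the intermediate crossed product $C_0^\bR(\bR^2) \rtimes \tK$ is Morita equivalent on the $c=-1$ summand to $C_0^\bR(\bR) \otimes C^\bR(S^{1,1})$, with $\bR$ the first coordinate of $\bR^2$ and $S^{1,1}$ the Pontrjagin dual of $\bZ_{\tb}$. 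The remaining $\bZ_{\ta}$-action then combines translation of the $\bR$-factor with the antipodal map $z \mapsto -z$ on $S^{1,1}$, the latter forced by the commutator relation $\ta\tb\ta^{-1} = c\tb$ on the $c=-1$ summand.

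The main obstacle is to match this staged crossed product with $(C_0(\bR) \otimes A) \rtimes \bZ$ for the specific $A = C(S^1, M_2(\bR))$ and half-rotation $\theta$ of Section \ref{sec:spec1}, rather than just showing both algebras have the same $K$-theory. My plan is to invoke the real Phillips-Raeburn analysis of Section \ref{sec:PR}: after $\cK_\bR$-stabilization and passage to a Morita equivalent spectrum-fixing automorphism, the antipodal action on $C^\bR(S^{1,1})$ and the automorphism $\theta$ on $A$ both realize the unique nontrivial class in $H^1(S^1, \bZ/2)$, and by the classification of locally inner automorphisms in Section \ref{sec:PR} this class determines the associated crossed product up to Morita equivalence. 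Tensoring with the translation action on $C_0(\bR)$ then gives the desired $\bZ$-equivariant Morita equivalence. Once the second arrow is in hand, the composition agrees with the T-duality map of \cite{DMDR} by naturality, as every arrow in the chain implements one of the standard T-duality moves (partial Fourier-Mukai dualization along a circle factor via Baum-Connes, Morita equivalence absorbing inner twists, Poincar\'e duality), and these compose to the physical T-duality predicted in \cite{DMDR}.
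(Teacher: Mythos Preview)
Your overall architecture matches the paper's: decompose the twisted $\bZ^2$-crossed product as an iterated crossed product, identify the inner stage with $C_0^\bR(\bR)\otimes A$ up to Morita equivalence, and then recognize the outer $\bZ$-action as $\theta$ via the real Phillips--Raeburn classification of Section~\ref{sec:PR}. Your treatment of the third, fourth, and fifth arrows is exactly what the paper does.

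There is, however, a genuine error in your computation of the intermediate stage. You claim that the $c=-1$ summand of $C_0^\bR(\bR^2)\rtimes\tK$ is Morita equivalent to $C_0^\bR(\bR)\otimes C^\bR(S^{1,1})$, with ``$S^{1,1}$ the Pontrjagin dual of $\bZ_{\tb}$.'' This conflates two different constructions. The group algebra $C^*_\bR(\bZ)\cong C(S^{1,1})$ would appear if $\tb$ acted \emph{trivially} on $\bR^2$; but here $\tb$ acts freely by translation on the second $\bR$-factor, so Green's imprimitivity gives $C_0^\bR(\bR)\rtimes\bZ_{\tb}\sim C^\bR(\bR/\bZ)=C(S^{2,0})$, the Real circle with \emph{trivial} involution. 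This is precisely why the paper can say the inner stage is Morita equivalent to $C_0^\bR(\bR)\otimes A$, since $A=C(S^1,M_2(\bR))$ has spectrum $S^{2,0}$.

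The error then contaminates your description of the residual $\bZ_{\ta}$-action. You call it the antipodal map $z\mapsto -z$ on $S^{1,1}$, a \emph{free} action, and then invoke Phillips--Raeburn. But Section~\ref{sec:PR} classifies \emph{spectrum-fixing} locally inner automorphisms; it says nothing about free actions, and your ``passage to a Morita equivalent spectrum-fixing automorphism'' is the very thing to be proved, not a tool you may cite. With the correct identification $S^{2,0}\cong\bR/\bZ$, the action of $\ta$ (which sends $U_{\tb}\mapsto -U_{\tb}$ on the $c=-1$ summand while fixing the functions on $\bR$) \emph{is} spectrum-fixing: it is inner after complexification (implemented by $x\mapsto e^{i\pi x}$) but not inner over $\bR$, since no continuous real-valued $W$ with $|W|=1$ can satisfy $W(x+1)=-W(x)$. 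Hence its Phillips--Raeburn class is the nontrivial element of $H^1(S^1,\bZ/2)$ and it agrees with $\theta$ up to inner automorphisms. Once you correct the $S^{1,1}/S^{2,0}$ confusion, your argument collapses onto the paper's.
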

\begin{proof}
The real Baum-Connes isomorphism for $\bZ$ with coefficients in $(A,\theta)$
sends $KKO^{\bZ}_j(C_0(\bR), A)$ to $KO_j(A\rtimes_\theta \bZ)$, which
by Lemma 3 can be identified with $KR^{-j}(T^2, \iota)$.  So let's
analyze the left-hand side, $KKO^{\bZ}_j(C_0(\bR), A)$. Since $A$ is
Morita equivalent to $C^{\bR}(S^1)$ and the $\bZ$-action $\theta$ is
the identity on $S^1$, this is isomorphic by Poincar\'e duality to
$KKO^{\bZ}_{j+1}(C_0(\bR)\otimes A, \bR)$ (\cite[Theorem 1]{MR1388299} and 
\cite[Theorem 4.10]{MR918241} --- the degree shift by $1$ comes from
the fact that $\dim S^1=1$), which in turn is isomorphic to 
$KKO_{j+1}\left(\left(C_0(\bR)\otimes A\right)\rtimes \bZ, \bR\right)$.
Here $\bZ$ is acting on $C_0(\bR)$ by translations and on $A$ by
$\theta$. On the other hand, we know by the proof of Theorem
\ref{thm:BCZ2twist} that 
\begin{multline}
KO^{1-j}(T^2, w)\xrightarrow[\cong]{\text{\textup{Poincar\'e
      duality}}}
KO_{j+1}(T^2, w)
\cong \\ KKO_{j+1}\bigl(C_0^{\bR}(\bR^2) \rtimes_\nu
 \bZ^2, \bR\bigr)
\cong KKO_{j+1}\left(\left(C_0^{\bR}(\bR)\otimes A\right)\rtimes
\bZ, \bR\right).
\label{eq:Gtw}
\end{multline}
The last isomorphism in \eqref{eq:Gtw}
is obtained by decomposing the twisted crossed
product by $\bZ^2$ as a crossed product by $\bZ$, which gives
something Morita equivalent to $C_0^{\bR}(\bR)\otimes A$, followed by
another crossed product by $\bZ$. This second crossed product has
$\bZ$ acting on $\bR$ by translations. The action on $A$ is trivial on
the spectrum $S^1\cong \bR/\bZ$, but not inner
because of the noncommutativity of $\tG$.  So up to inner
automorphisms it is given by $\theta$, which represents the unique
nontrivial Phillips-Raeburn class.  Splicing all the isomorphisms
together, the result follows. 
\end{proof}

\bibliographystyle{hplain}
\bibliography{BCT}
\end{document}